\newtheorem{theorem}{Theorem}[section]
\begin{document}
\title{Geodesic Quantum Walks}

\author{Giuseppe Di Molfetta}
\affiliation{CNRS, LIS, Aix-Marseille Université, Université de Toulon, Marseille, France}
\author{Victor Deng}
\affiliation{Ecole Normale Superieure, Departement d'Informatique, Paris, France}

\begin{abstract}
    We propose a new family of discrete-spacetime quantum walks capable to propagate on any arbitrary triangulations. Moreover we also extend and generalise the duality principle introduced in~\cite{arrighi2019curved}, linking continuous local deformations of a given triangulation and the inhomogeneity of the local unitaries that guide the quantum walker. We proved that in the formal continuous limit, in both space and time, this new family of quantum walks converges to the (1+2)D massless Dirac equation on curved manifolds. We believe that this result has relevance in both modelling/simulating quantum transport on discrete curved structures, such as fullerene molecules or dynamical causal triangulation, and in addressing fast and efficient optimization problems in the context of the curved space optimization methods. 
    \end{abstract}

\maketitle






\textit{Introduction} Classical random walks are well known to approximate Brownian motion and the continuous diffusion equation~\cite{knight1962random}. In 1960 Roberts and Ursell~\cite{roberts1960random} argued that this is still true if the jiggling motion of small dust particles occurs on a curved spacetime instead (such as in the proximities of a strong gravitational field). Later, in 1984, Varopoulos~\cite{varopoulos1984brownian} proved rigorously that, for a compact Riemannian manifold, if a triangulation is given, there always exists a canonical way to define a random walk on the vertices of the triangulation which converges to the Brownian motion. This suggested, more than thirty years ago, that classical random walks on triangulations provide simple procedures to ``discretize” diffusion processes on curved space. The quantum analogue of random walks has also been proven capable of simulating by means of local unitary operations another phenomenon ubiquitous in nature, transport~\cite{arrighi2018dirac, marevs2020quantum}. Quantum Walks describe situations where a quantum particle is taking steps on a lattice conditioned on its internal state, typically a (pseudo) spin one half system. The particle dynamically explores a large Hilbert space associated with the position of the lattice and allows thus to simulate a wide range of physical phenomena. With quantum walks, the transport is driven by an external discrete operation (coin and shift), which sets it apart from other lattice quantum simulation concepts where transport typically rests on tunnelling between adjacent sites: all dynamics processes are discrete in space and time. More in general, they have been extended to graphs~\cite{verga2019interacting} and simplicial complexes~\cite{matsue2016quantum, aristote2020dynamical}. In the same way that classical random walks provide a discretization procedure for the diffusion equation, quantum walks discretize the Dirac equation~\cite{arrighi2014dirac}, i.e. the wave equation that describes the motion of spin one half systems (\textit{aka} fermions) in a relativistically invariant way. There are essentially two ways to discretize transport on curved surfaces through the implementation of quantum walks. The first, more oriented towards quantum simulation on specific physical platforms, is to encode the space-time metrics in local unitaries, thus making them non-homogeneous. This procedure would correspond to sampling from the continuous manifold a finite data set of size corresponding to the number of grid points. Indeed, one of the authors has already proved that quantum walks can efficiently integrate the curved dynamics of a single particle in continuous space-time and converge to the general covariant Dirac equation~\cite{arrighi2018quantum, di2013quantum}. These results were obtained on one- and two-dimensional rectangular grids~\cite{arnault2017quantum} and later extended to any spatial dimensions~\cite{arrighi2017quantum}. The other way is to consider a discrete curved surface, e.g. a non-homogeneous triangulation and define a homogeneous quantum walk on it. This procedure is more oriented towards modelling a transport phenomenon on discrete curved structures such as fullerene molecules, which consist entirely of carbon and take the form of a hollow sphere, ellipsoid or tubular. 


\textit{Related work} The aforementioned two approaches, long separated, have recently been reconciled by one of the authors, proving that there is, in some specific cases, a duality between the continuous local deformation of a triangulation and the inhomogeneity of the local unitaries that guide the quantum walker~\cite{arrighi2019curved}. However, it was possible to prove this duality principle only in the case of diagonal deformation matrices, which in the case of the space-time tensor, would correspond to synchronous metrics. In this manuscript we propose to extend this result to any non-diagonal metric and to valorise the previously introduced duality principle in its maximum generality. Moreover, our work recalls the so-called Geodesic Random Walk, first introduced by Jørgensen~\cite{jorgensen1975central} and recently extended to Finsler manifolds~\cite{ma2021geodesic}. In such classical walks, in the absence of coin-degree of freedom, the spacetime metrics coefficients are embedded in the spatial increments. 

\textit{Motivations} There are numerous motivations to introduce the Geodesic Quantum Walk~(GQW). First is the emergence of massless Dirac fermions on graphene-like materials~\cite{novoselov2005two}, and within crystals in general. Quantum transport within such materials may be the physical phenomena that we wish to model by GQWs. Another topic is related to topological states, which are well known to be non-trivial on triangular grids~\cite{kitagawa2010exploring}. Moreover GQWs should allow us to model all sorts of topologies as simplicial complexes, and the duality principle would provide a simple procedure for their quantum simulation. Finally yet another motivation for exploring non-flat geometries is general relativity. Simulating curved transport on spacetime triangulation is reminiscent of the question of matter propagation in triangulated spacetime, as arising, e.g., in Causal Dynamical Triangulation~\cite{ambjorn2014quantum} or Loop Quantum Gravity~\cite{rovelli2015covariant}. Finally, irregular lattices and/or random graphs may also be of interest: can they be re-interpreted in geometrical terms, i.e. in terms of an effective metric? Let us precise that a rigorous classical limit of these quantum schemes, including the quantum gravity theories, is still lacking and that techniques such as coarse graining quantum maps~\cite{duranthon2021coarse}, in the context of quantum automata, would play a central role in order to derive the macroscopic classical dynamical equations in the thermodynamic limit. 

The manuscript is organised as follows: In Sec.~\ref{subsect:triangular_qw} we introduce the basic Quantum Walk over equilateral triangles and we recall how to derive the formal continuous limit in space and time. Then Sec.~\ref{sec:geoQW} is dedicated to extending the definition of the QW in a way that the local operators could take into account any arbitrary triangulations; this results in covering transport over curved null-geodesics and we thus prove that, taking the formal continuous limit, we recover a transport equation in curved spacetime, namely the general covariant Dirac equation. In Sec.~\ref{sec:conc} we provide a summary and some perspectives.


\section{Quantum Walking over triangles}
\label{subsect:triangular_qw}

Here we first reviewed the uniform QW as introduced by one of the authors in \cite{arrighi2018dirac}. The walker is defined on the edges of a regular triangular grid, which sets it apart from other QWs defined on triangles and honeycomb structures, where the QW lives on vertices. The triangles are equilateral with sides $k = 0,1,2$ as shown in Fig.~\ref{fig:triangles}. 
Each triangle $v$, at time $j$, hosts two $\mathbb{C}^3$ vectors, denoted by $(\psi_{j,k}^{0,v})_{k = 0,1,2}$ and $(\psi_{j,k}^{1,v})_{k=0,1,2}$. As each edge of the lattice is shared by two triangles, we label each triangle with $0$ or $1$ such that any two adjacent triangles have different labels, and we assign coin states $0$ to triangles labeled $0$ and coin states $1$ to triangles labeled $1$. The generic state of the walker at a given time $j'$ therefore reads:
\begin{equation}
    \Psi_{j'} = \sum_{v_0, k} \psi_{j',k}^{0,v_0} \ket{k_{v_0},0} + \sum_{v_1, k} \psi_{j',k}^{1,v_1} \ket{k_{v_1},1}
\end{equation}
where $v_0$ spans the $0$-labeled triangles, $v_1$ spans the $1$-labeled triangles and $k_v$ refers to the $k$-th edge of triangle $v$. The QW dynamics is recovered by the sequential application of two unitary local operators: the first operator $\hat C_k$, namely the quantum coin, is applied over each of the two-component wave-functions lying on edges labeled $k$, for $k = 0,1,2$. The coin depends on $k$ in general. The second operator, $R$, rotates every triangle anticlockwise, with component $k$ of each triangle hopping to component $(k+1) \bmod 3$, as shown in Fig.~\ref{fig:triangles}. Notice that it coincides with the simultaneous application of three shift operators $S_{u_k}$, along each unitary displacement vector $u_k,~k=0,1,2$, such that $R = \sum_{k=0}^2 S_{u_k}$, as shown in Fig.~\ref{fig:triangles}. Each operator $S_{u_k}$ acts on the edge $k$, internally to each triangle, moving the complex amplitude from the edge $k$ to the edge $k+1$. Let $\Psi^k = \sum_{v_0} \psi_k^{0,v_0} \ket{k_{v_0},0} + \sum_{v_1} \psi_k^{1,v_1} \ket{k_{v_1}, 1}$, we have, for each triangle $v_0$,
\begin{equation}
    \begin{pmatrix}
        (S_{u_k} \Psi^k)_{k+1 \bmod 3}^{0,v_0} \\
        (S_{u_k} \Psi^k)_{k+1 \bmod 3}^{1, e(k,v_0)}
    \end{pmatrix}
    =
    \begin{pmatrix}
        \psi_k^{0,v_0}\\
        \psi_k^{1,v_0}
    \end{pmatrix}
\end{equation}
where $e(k,v)$ denotes the neighbor of triangle $v$ along edge $k$. The overall shift operator can be written as follows~:
\begin{small}
\begin{eqnarray}
    S_{u_k} = \sum_{v_0} (\ket{(k+1 \bmod 3)_{v_0}, 0} \bra{k_{v_0}, 0} \nonumber \\ 
    + \ket{(k+1 \bmod 3)_{e(k, v_0)}, 1} \bra{k_{v_0}, 1}).
\end{eqnarray}
\end{small}

Let us introduce the position $r := (x,y)$ as the center of each edge $k_v$, for a given triangle $v$. The two-component wave function $\Psi_{j,k}^v$ reads~:
\begin{equation}
\Psi_{j,k}(r) = \begin{pmatrix} \psi_{j,k}^0(r) \\ \psi_{j,k}^1(r) \end{pmatrix}, 
\end{equation}
and for a fixed $k$, we therefore have

\begin{equation}
    \begin{pmatrix}
        (S_{u_k} \Psi_k)_{k+1 \bmod 3}^0(r) \\
        (S_{u_k} \Psi_k)_{k+1 \bmod 3}^1(r)
    \end{pmatrix}
    =
    \begin{pmatrix}
        \psi_k^0(r + \Delta u_k) \\
        \psi_k^1(r - \Delta u_k)
    \end{pmatrix}
    \label{eqn:shift_r}
\end{equation}

where $\Delta$ is the lattice discretization step of the triangulation, \emph{i.e.} half the length of an edge of the grid. It is also useful to relate the coordinate basis $\{u_k\},~k=0,1,2$ to the rectangular coordinate system $\{u_s\},~s=x,y$:
\begin{equation}
    u_k = \cos\left(\frac{2k\pi}3\right) u_x + \sin\left(\frac{2k\pi}3\right) u_y =: \mathcal{R}_k^s u_s
    \label{eqn:tri_basis}
\end{equation}
where $\mathcal{R}$ is the coordinates change matrix.

Altogether, the triangular QW is given by the following recursive relations~:
\begin{equation}
    \Psi_{j+1,k} = \left(\prod_{i=0,1,2} S_{u_{k+i \bmod 3}} \hat C_{k+i \bmod 3}\right) \Psi_{j,k}.
\end{equation}

\begin{figure}[hbt!]
\centering
\begin{tikzpicture}[scale=2]
\draw[black] (-1,0) -- (1,0) -- (0,{sqrt(3)}) -- cycle;
\filldraw[color=black,fill=black!30!white] (-.5,{sqrt(3)/2}) -- (0,0) -- (.5, {sqrt(3)/2}) -- cycle;
\node at (-.9,{sqrt(3)/4}) {1};
\node at (-.4,{sqrt(3)/2*.6}) {0};
\node at (-.5,-.15) {2};
\node at (.4,{sqrt(3)/2*.6}) {1};
\node at (.9,{sqrt(3)/4}) {0};
\node at (.5,-.15) {2};
\node at (-.4,{sqrt(3)*3/4}) {1};
\node at (.4,{sqrt(3)/2*1.5}) {0};
\node at (0,{sqrt(3)/2+.15}) {2};
\draw[gray,thin,->] (-.5,{sqrt(3)/2*.33}) +(-30:.2) arc(-30:270:.2);
\draw[blue,thick,->] (-.25,{sqrt(3)/4}) -- (.25, {sqrt(3)/4}) node[midway, below] {$u_0$};
\draw[red,thick,->] (.25,{sqrt(3)/4}) -- (0,{sqrt(3)/2}) node[midway, right] {$u_1$};
\draw[green!60!black,thick,->] (0,{sqrt(3)/2}) -- (-.25, {sqrt(3)/4}) node[midway,left] {$u_2$};
\end{tikzpicture}
\caption{The triangular QW: gray triangles are labeled 1, white ones are labeled 0}
\label{fig:triangles}
\end{figure}
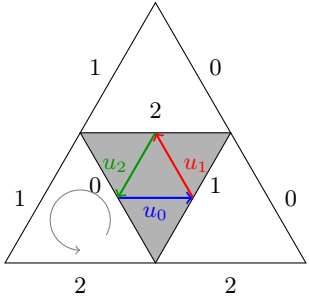

We will now prove that by changing the coin basis and choosing a specific coin, we recover in the continuous spacetime limit the massless Dirac equation in (2+1) dimensions.

\begin{theorem}[Dirac quantum walk over triangulation]
    Consider the previous quantum walk where:
    \begin{equation}
        \hat C_0 = \hat C_1 = \hat C_2 =: \hat C = \mathrm e^{i\pi/3} \mathcal{O}_z\left(-\frac{2\pi}3\right) = \mathrm e^{i\pi/3} \exp\left(i\frac{\pi}3 \sigma_z\right)
    \end{equation}
    If $\tilde \Psi_k(t, r) = U \Psi_k(t, r)$, $C' = U C U^\dagger$ and $U = \exp(-i\alpha \sigma_y/2) C^2$ with $\alpha = -\arccos(\sqrt5/3)$, then the quantum walk in the new basis reads as
    \begin{equation}
        \tilde \Psi_{j+1,k} = \left(\prod_{i=0,1,2} S_{u_{k+i \bmod 3}} UCU^\dagger\right) \tilde \Psi_{j,k}
        \label{eqn:qw_basischange}
    \end{equation}
    and, for $k = 0$, it admits as continuous limit in spacetime, with discretization parameters $\Delta = \varepsilon$ for the spatial dimension and $\Delta_t = \varepsilon$ for the time dimension, the following partial differential equation:
    \begin{equation}
        \partial_t \Psi(t,x,y) = (\sigma_x \partial_x + \sigma_y \partial_y) \Psi(t,x,y).
        \label{eqn:dirac_uniform}
    \end{equation}
\end{theorem}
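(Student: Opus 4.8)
The plan is to collapse the $k=0$ dynamics of Eq.~\eqref{eqn:qw_basischange} into a single closed iteration on the two-dimensional coin space, pass to Fourier space where the shifts become diagonal, and read off the continuum equation from the $O(\varepsilon)$ term of the resulting $2\times 2$ symbol. Since $S_{u_k}$ carries amplitudes supported on direction-$k$ edges onto direction-$(k{+}1)$ edges and $u_0+u_1+u_2=0$ by Eq.~\eqref{eqn:tri_basis}, the product over $i=0,1,2$ in Eq.~\eqref{eqn:qw_basischange} for $k=0$ returns the edge-$0$ spinor to itself, so it defines a one-step map $\tilde\Psi_{j+1,0}=\hat W\,\tilde\Psi_{j,0}$ with $\hat W=S_{u_2}C'S_{u_1}C'S_{u_0}C'$ and $C'=UCU^\dagger$.

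I would then Fourier transform in $r$. By Eq.~\eqref{eqn:shift_r} each $S_{u_k}$ becomes multiplication by $\exp(i\Delta\,(u_k\cdot p)\,\sigma_z)$ with $u_k\cdot p=\cos(2\pi k/3)\,p_x+\sin(2\pi k/3)\,p_y$, so that
\[ \hat W(p;\Delta)=\prod_{i=0}^{2}e^{\,i\Delta(u_{2-i}\cdot p)\sigma_z}\,C'. \]
Setting $\Delta=\Delta_t=\varepsilon$ and Taylor-expanding, the zeroth order is $\hat W(p;0)=(C')^{3}=UC^{3}U^\dagger=I$ because $C^{3}=e^{i\pi}e^{i\pi\sigma_z}=I$, which is the consistency condition needed for a first-order-in-time continuum limit. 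The $O(\varepsilon)$ term is $i\varepsilon\sum_{k=0}^{2}(u_k\cdot p)\,(C')^{2-k}\sigma_z(C')^{-(2-k)}$, a $p$-weighted sum of three conjugates of $\sigma_z$. Crucially, the coin-basis change $U$ does real work here: if $C'$ commuted with $\sigma_z$ all three conjugates would coincide with $\sigma_z$ and the first-order term would vanish by $\sum_k u_k=0$. Each conjugate $(C')^{m}\sigma_z(C')^{-m}$ is a Bloch unit vector obtained from $\hat z$ by a rotation of $2\pi m/3$ about the axis $\hat a=(\sin\alpha,0,\cos\alpha)$ into which $U$ sends $\hat z$.

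The heart of the proof is to verify that the coin angle $\pi/3$ and the tilt $\alpha=-\arccos(\sqrt 5/3)$ are tuned precisely so that this first-order term equals $i\varepsilon\,(\sigma_x p_x+\sigma_y p_y)$, i.e. $\hat W(p;\varepsilon)=I+i\varepsilon(\sigma_x p_x+\sigma_y p_y)+O(\varepsilon^{2})$. Transcribing back to position space gives $\tilde\Psi_{j+1}(r)=\tilde\Psi_j(r)+\varepsilon(\sigma_x\partial_x+\sigma_y\partial_y)\tilde\Psi_j(r)+O(\varepsilon^{2})$, and with $t=j\varepsilon$ the limit $\varepsilon\to 0$ is exactly Eq.~\eqref{eqn:dirac_uniform}; the restriction to $k=0$ enters only at this last identification, the sectors $k=1,2$ producing the same equation written in a rotated Cartesian frame.

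The main obstacle is precisely this last matching, and it is there that the unusual value $\cos\alpha=\sqrt 5/3$ is forced. One expands the three tilted, $2\pi/3$-rotated images of $\hat z$ in Cartesian components, weights them by $\cos(2\pi k/3)p_x+\sin(2\pi k/3)p_y$, and imposes simultaneously that the $\sigma_z$ (mass) component cancels, that the $\sigma_x$ and $\sigma_y$ channels carry equal weight with no cross term, and that this common weight equals one; together these conditions pin the single remaining freedom through $\sin^{2}\alpha=4/9$, i.e. $\cos\alpha=\sqrt5/3$. A secondary, bookkeeping-level difficulty is keeping the direction-labels of the edges consistent through the three coin/shift stages, so that the composition genuinely closes on the $k=0$ sector and the powers of $C'$ are correctly matched to the weights $u_k\cdot p$.
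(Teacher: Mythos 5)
Your overall strategy --- close the $k=0$ sector into a single one-step operator, expand to first order in $\varepsilon$, demand that the zeroth-order term be the identity and that the first-order term reproduce the Dirac generator --- is the same as the paper's; the Fourier-space packaging and the Bloch-sphere reading of the conjugates $(C')^{m}\sigma_z(C')^{-m}$ are a convenient repackaging of the paper's direct Taylor expansion of the shifts. Your identification of the first-order term as $i\varepsilon\sum_k (u_k\cdot p)\,(C')^{2-k}\sigma_z(C')^{-(2-k)}$, and of the roles of $C'^3=I$ and $u_0+u_1+u_2=0$, are correct.

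However, the step you yourself call the heart of the proof is carried out in the wrong basis, and as stated it cannot succeed. The three matrices $(C')^{2-k}\sigma_z(C')^{-(2-k)}$ are indeed rotations of $\hat z$ about the tilted axis $\hat a$, but one of them (the $k=2$ term, $m=0$) is $\sigma_z$ itself, and it carries the weight $u_2\cdot p=-\tfrac12 p_x-\tfrac{\sqrt3}{2}p_y\neq 0$. Writing the three Bloch vectors as $\hat n_m=R_{\hat a}(2\pi m/3)\hat z$, their $z$-components are $1,\,c,\,c$ with $c=-\tfrac12+\tfrac32\cos^2\alpha$, so the $\sigma_z$ part of your weighted sum is $(1-c)(u_2\cdot p)$, which vanishes identically only when $\cos^2\alpha=1$, i.e.\ no tilt at all --- in which case the whole first-order term vanishes. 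Your cancellation conditions therefore have no solution in the $\tilde\Psi$ basis. What is actually true is
\begin{equation*}
\sum_{k}(u_k\cdot p)\,(C')^{2-k}\sigma_z(C')^{-(2-k)}=U\,(\sigma_x p_x+\sigma_y p_y)\,U^\dagger ,
\end{equation*}
which is not $\sigma_x p_x+\sigma_y p_y$, since conjugation by $U=e^{-i\alpha\sigma_y/2}C^2$ tips $\sigma_x$ out of the $xy$-plane. The missing step is to undo the coin-basis change before matching: for $\Psi=U^\dagger\tilde\Psi$ the coefficient of $\partial_{u_k}$ becomes $\tau_k=C^{2-k}\,(U^\dagger\sigma_z U)\,C^{-(2-k)}$, i.e.\ three Bloch vectors all tilted from $\hat z$ by the \emph{same} polar angle $\arccos(\sqrt5/3)$ and spaced $2\pi/3$ apart in azimuth about $\hat z$ (not about $\hat a$). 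In that configuration the equal $\sigma_z$ components cancel by $\sum_k u_k=0$, and the in-plane components of common magnitude $\sin\abs{\alpha}=2/3$ combine to give exactly $\sigma_x p_x+\sigma_y p_y$ --- which is also why the theorem states the continuum limit for $\Psi$ rather than $\tilde\Psi$. Your numerical condition $\sin^2\alpha=4/9$ is the right one, but it is forced in the $\Psi$ basis, not where you impose it.
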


\begin{proof}
By developing equation \ref{eqn:shift_r} around $\Delta$, we obtain:
\begin{equation}
    (S_{u_k} \Psi)(r) = \Psi(r) + \Delta \sigma_z \partial_{u_k} \Psi(r) + \mathrm o(\Delta)
\end{equation}
Therefore, by developing equation \ref{eqn:qw_basischange} around $\varepsilon$, we get up to the first order:
\begin{small}
\begin{multline}
    \tilde \Psi_0(t,r) + \varepsilon \partial_t \tilde\Psi_0(t,r) = C'^3 \tilde \Psi_0(t,r) + \varepsilon[C'^2 \sigma_z C' \partial_{u_0} \tilde \Psi_k(t,r) +\\
    +C' \sigma_z C'^2 \partial_{u_1} \tilde \Psi_k(t,r) + \sigma_z C'^3 \partial_{u_2} \tilde \Psi_0(t,r)] + \mathrm o(\varepsilon)
\end{multline}
\end{small}
Then by reversing the basis change:
\begin{multline}
    \Psi_0(t,r) + \varepsilon \partial_t \Psi_0(t,r) = U^\dagger C'^3 U \Psi_0(t,r)\\
    + \varepsilon(\tau_2 \partial_{u_2} + \tau_1 \partial_{u_1} + \tau_0 \partial_{u_0}) \Psi_0(t,r) + \mathrm o(\varepsilon)
    \label{eqn:dl_uniform}
\end{multline}
where:
\begin{equation}
    \begin{split}
        \tau_0 &= C^2 U^\dagger \sigma_z U C \\
        \tau_1 &= C U^\dagger \sigma_z U C^2 \\
        \tau_2 &= U^\dagger \sigma_z U
    \end{split}
    \label{eqn:tau_exprs}
\end{equation}

We have $U^\dagger C'^3 U = U^\dagger (U C U^\dagger)^3 U = C^3 = I_2$ with our choice of $C$. To prove the theorem, we therefore need:

\begin{equation}
    \sum_{i=0}^2 \tau_i \partial_{u_i} = \sigma_x \partial_x + \sigma_y \partial_y.
    \label{eqn:partialder_condition}
\end{equation}

We first translate the $(\partial_{u_i})_{i=0,1,2}$ in terms of the coordinates $(x,y)$, using equation \ref{eqn:tri_basis}:
\begin{equation}
    \partial_{u_k} = \cos\left(\frac{2i\pi}3\right) \partial_x + \sin\left(\frac{2i\pi}3\right) \partial_y = \mathcal{R}_k^s \partial_s
\end{equation}

We can then derive a relation between the $\tau_i$ and the $\sigma^s$:
\begin{equation}
    \mathcal{R}_i^s \tau^i = \sigma^s,\quad s=x,y 
    \label{eqn:tausigma_expr}
\end{equation}

This equation, along with the unitarity and traceless condition on the $\tau_i$, leads to the following unique $\tau_i$ matrices, up to a sign:
\begin{equation}
    \begin{split}
        \tau_0 &= \frac23 \sigma_x + \kappa \sigma_z \\
        \tau_1 &= -\frac13 \sigma_x + \frac{\sqrt3}3 \sigma_y + \kappa \sigma_z \\
        \tau_2 &= -\frac13 \sigma_x - \frac{\sqrt3}3 \sigma_y + \kappa \sigma_z
    \end{split}
    \label{eqn:tau_sol}
\end{equation}
with $\kappa = \pm \frac{\sqrt5}3$. By choosing $\kappa = \frac{\sqrt5}3$, and by using equations \ref{eqn:tau_exprs} and \ref{eqn:tau_sol}: 
\begin{equation}
    \tau_0 = C^2 U^\dagger \sigma_z U C = \frac23 \sigma_x + \kappa \sigma_z,
    \label{eqn:tau_combined}
\end{equation} we can derive $UC = \exp(-i\alpha \sigma_y/2)$ with $\alpha = -\arccos(\sqrt5/3)$, hence $$U = \exp(-i\alpha \sigma_y/2) C^\dagger = \exp(-i\alpha \sigma_y/2) C^2$$ 
where in the last sentence we used that $C$ is unitary and $C^3 = I_2$ from the zero$^{th}$-order condition. Finally notice that with the aforementioned choice of $U$ and $C$, the equations \ref{eqn:tau_exprs}, \ref{eqn:tau_sol} and \ref{eqn:tau_combined} are satisfied. Therefore, by taking the formal limit $\varepsilon \to 0$ in equation \ref{eqn:dl_uniform} and using equation \ref{eqn:partialder_condition}, we recover the following Dirac equation in (2+1) dimensions:
\begin{equation}
    \partial_t \Psi(t,x,y) = (\sigma_x \partial_x + \sigma_y \partial_y) \Psi(t,x,y).
\end{equation}
\end{proof}
We simulated the Dirac QW with different initial states. In each density plot, the color at a given point corresponds to the sum of the probability densities at each edge of the triangle at this point. All simulations were performed on grids of $802 \times 400$ triangles, with periodic conditions. The spatial coordinates shown on the axes are multiples of $\varepsilon$. The source for C++ code for all the simulations included in this manuscript is freely accessible at \url{https://github.com/vdng9338/qw_simul}.
In Fig.~\ref{fig:unif_square0}.a, whose initial state is concentrated in a single point, the walker propagates isotropically at unit speed. In Fig.~\ref{fig:unif_square0}.b, where the initial state is more spread, we can see a left-moving behavior along with vertical spreading (still at unit speed). 

\begin{figure}[hbt!]
    \centering
    \begin{minipage}{.23\textwidth}
        \centering
        \includegraphics[width=\textwidth,clip,trim={50px 50px 50px 50px}]{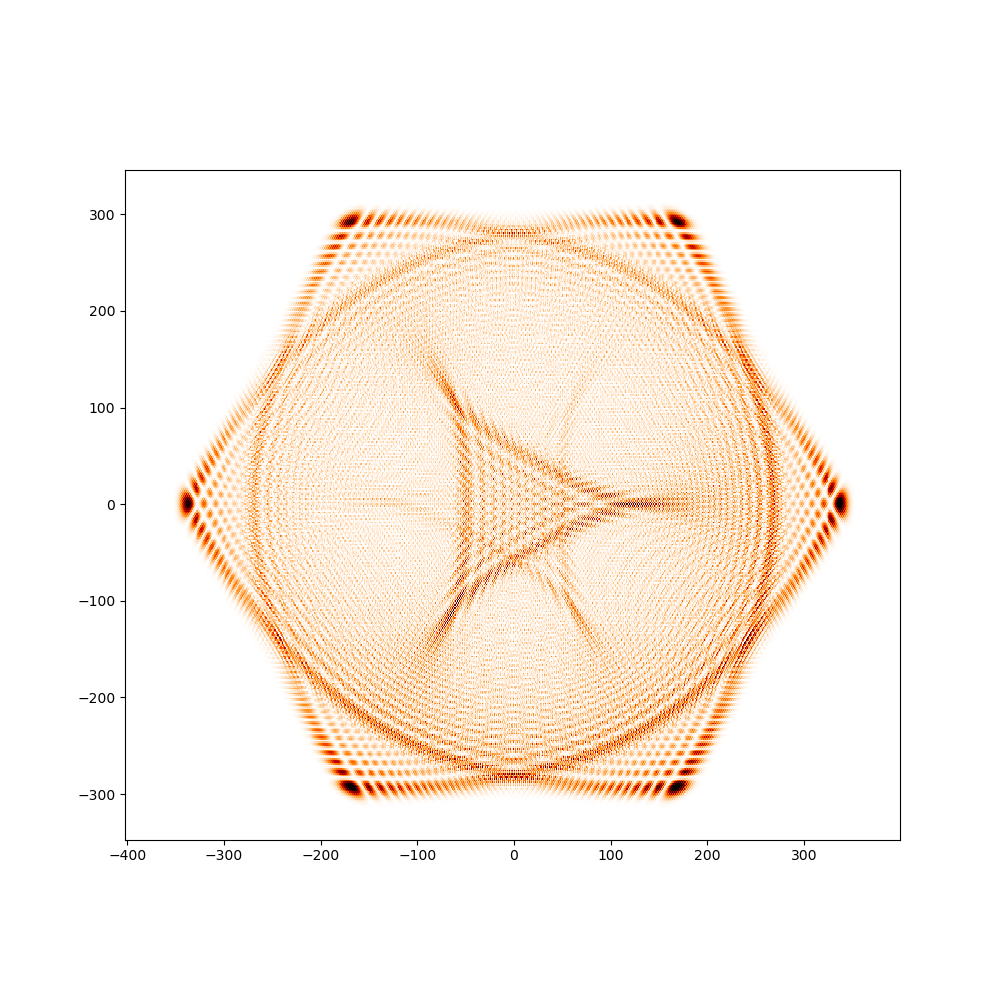}
        \subcaption{After 300 steps}
    \end{minipage}
    \begin{minipage}{.23\textwidth}
        \centering
        \includegraphics[width=\textwidth,clip,trim={50px 50px 50px 50px}]{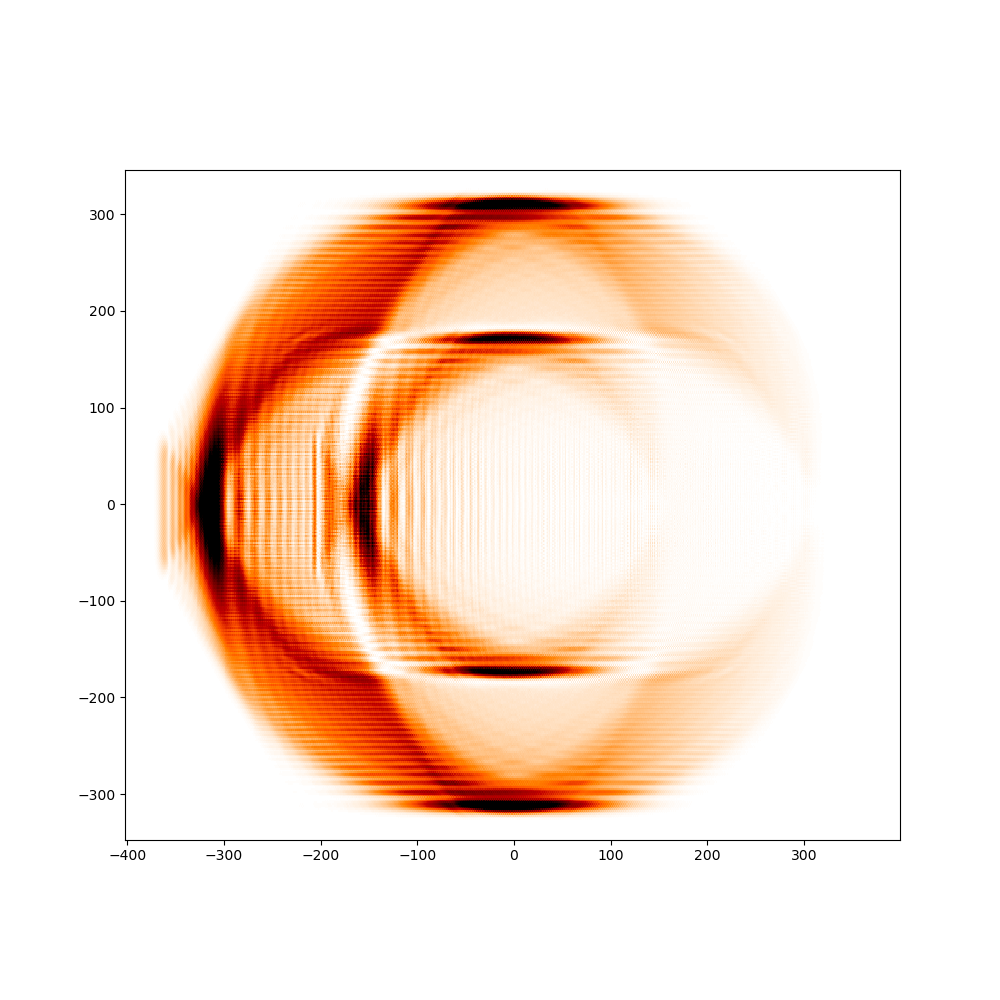}
        \subcaption{After 250 steps}
    \end{minipage}
    \caption{(Color online) (a) Simulation with initial amplitudes $1$ on edge $0$ of triangle $(0,0)$; (b) Simulation with an initially uniformly spread amplitude over a centered rectangle (size of the rectangle: a fifth of each dimension), sides $0$ only}
    \label{fig:unif_square0}
\end{figure}

\section{The Geodesic Quantum Walk}
\label{sec:geoQW}

Simulating/modelling transport on physical molecules such as fullerenes requires taking into account the non-flat geometry of such structures. Let us consider, as an example, the triangulation of the sphere. A Dirac QW can be implemented on it, as shown in Fig. \ref{fig:sphere}. 
Indeed, to recover the metrics of a unit sphere, we (conformally) project it onto a plane at $z~=~-1$, and we determine local orthonormal bases of the sphere that project onto orthogonal bases of the plane aligned with the axes.
The inverse of the stereographical projection reads:
\begin{small}
\begin{equation}
    \varphi(x,y) = \left(\frac{2x}{\frac12(x^2+y^2)+2}, \frac{2y}{\frac12(x^2+y^2)+2}, 1-\frac{4}{\frac12(x^2+y^2)+2}\right),
\end{equation}
\end{small}
and its partial derivatives read:
\begin{small}
\begin{equation}
    \begin{split}
        \frac{\partial \varphi}{\partial x} &= \left( \frac{4(-x^2+y^2+4)}{(x^2+y^2+4)^2}, -\frac{8xy}{(x^2+y^2+4)^2}, \frac{16x}{(x^2+y^2+4)^2}\right) \\
        \frac{\partial \varphi}{\partial y} &= \left( -\frac{8xy}{(x^2+y^2+4)^2}, \frac{4(x^2-y^2+4)}{(x^2+y^2+4)^2}, \frac{16y}{(x^2+y^2+4)^2}\right).
    \end{split}
\end{equation}
\end{small}
Flattening a part of the sphere on a plane corresponds to deforming the local basis of the triangulation by~: 
\begin{small}
\begin{equation}
    \Lambda(x,y) = \begin{pmatrix}
        1/\lVert \frac{\partial \varphi}{\partial x} \rVert_2 & 0 \\
        0 & 1/\lVert \frac{\partial \varphi}{\partial y} \rVert_2
    \end{pmatrix}.
\end{equation}
\end{small}
Once the metrics is flattened, but deformed, we can implement the very same QW, introduced previously. We also know that, if the metrics is diagonal, as in the aforementioned example, and we are sufficiently far from the singularities~\footnote{Indeed the full way to model the discrete curvature, including the singularities, is by considering the presence of dislocations, such as pentagonal/conical defects (e.g. 12 exactly for fullerenes)~\cite{topofull}. A systematic study of dislocations in the context of QWs has been extensively considered by one of the authors in an independent work.}, the curved transport of the walker can be simulated over a regular (aka non deformed) triangulation, at the price of considering non uniform evolution operators. In the following we will prove that there exists a more general and new family of QW, that we call Geodesic QW (GQW), capable to simulate \textit{any} arbitrary deformed triangulation. Moreover we introduce anisotropic spacetime discretization in order to recover the most general curved Dirac equation in the continuous limit.


\begin{figure}[t]
\includegraphics[width=10cm]{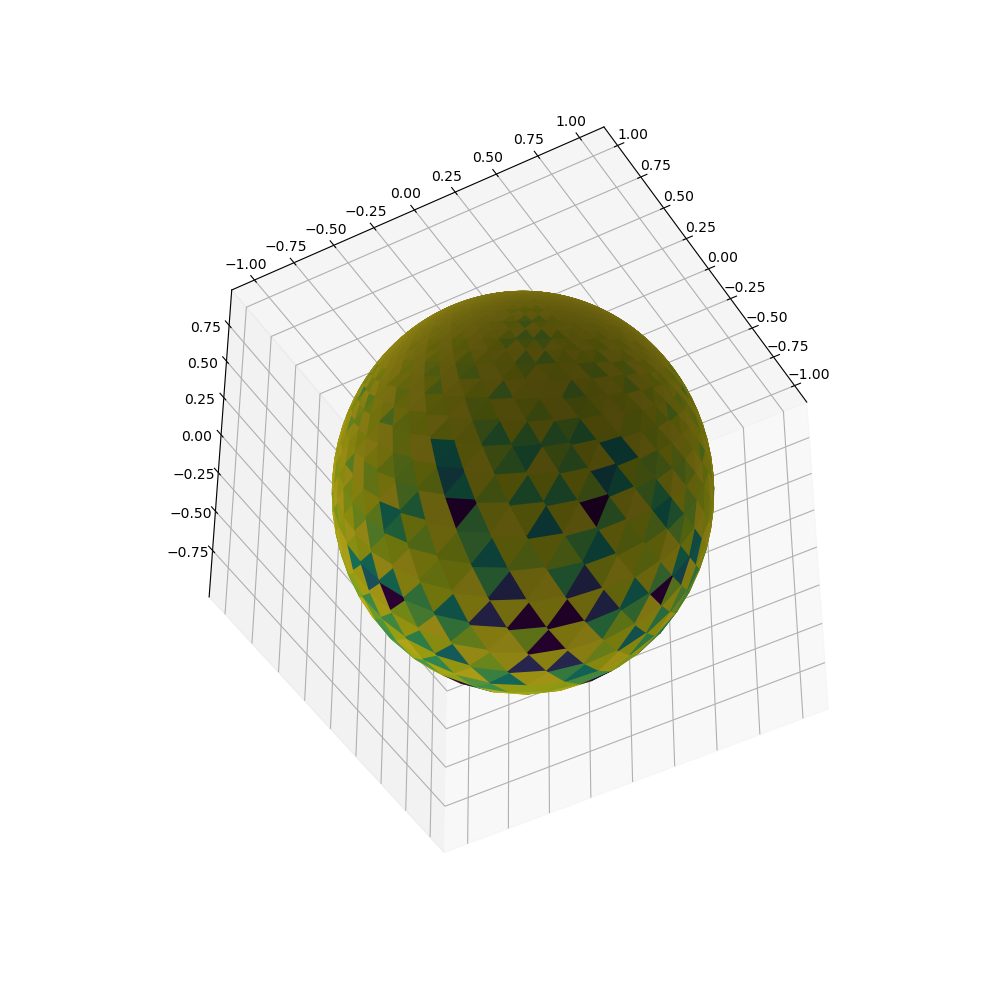}
\centering
\caption{(Color online) Dirac Quantum Walker over the stereographically projected sphere after 600 steps.} 
 \label{fig:sphere}
\end{figure}

To begin with, let us explore the case of a global homogeneous deformation of the lattice. Such a transformation consists in a uniform change of basis of the following form:
\begin{equation}
    u'_i = \begin{pmatrix} \lambda_{00} & \lambda_{01} \\ \lambda_{10} & \lambda_{11} \end{pmatrix} u_i =: \Lambda u_i
\end{equation}
meaning that $u'_x = \lambda_{00} u_x + \lambda_{10} u_y$ and $u'_y = \lambda_{01} u_x + \lambda_{11} u_y$.

The derivatives $\partial_{u_i}$ also transform as:
\begin{equation}
    \partial_{u'_i} = \Lambda \partial_{u_i}.
    \label{eqn:uniform_der}
\end{equation}

By actually deforming the triangulation and using the Dirac QW defined in the previous section, that is by solving together \ref{eqn:uniform_der} and \ref{eqn:dirac_uniform}, we get:
\begin{equation}
  \partial_t \psi = \{B^s,\partial_s\} \psi
    \label{eqn:unif_deform_pde}
\end{equation}
with
\begin{equation}
    \begin{split}
    B^x &= \lambda_{00} \sigma^x + \lambda_{01} \sigma^y \\
    B^y &= \lambda_{10} \sigma^x + \lambda_{11} \sigma^y
    \end{split}
\end{equation}
or, with equation \ref{eqn:tri_basis},
\begin{equation}
    B^s = \Lambda_i^s \sigma^i = \Lambda_i^s R_i^j \tau_j.
    \label{eqn:b_expr}
\end{equation}
where $\Lambda_i^s = \lambda_{si}$, $0$ corresponds to $x$ and $1$ corresponds to $y$. Then from equation \ref{eqn:dl_uniform}, we get:
\begin{equation}
    \partial_t \Psi_k(t,r) = (\tau'_2 \partial_{u_2} + \tau'_1 \partial_{u_1} + \tau'_0 \partial_{u_0})
\end{equation}
with:
\begin{equation}
    \begin{split}
        \tau'_0 &= \frac23 (\lambda_{00} \sigma_x + \lambda_{01} \sigma_y) + \kappa \sigma_z \\
        \tau'_1 &= -\frac13 (\lambda_{00} \sigma_x + \lambda_{01} \sigma_y) + \frac{\sqrt3}3 (\lambda_{10} \sigma_x + \lambda_{11} \sigma_y) + \kappa \sigma_z \\
        \tau'_2 &= -\frac13 (\lambda_{00} \sigma_x + \lambda_{01} \sigma_y) - \frac{\sqrt3}3 (\lambda_{10} \sigma_x + \lambda_{11} \sigma_y) + \kappa \sigma_z,
    \end{split}
\end{equation}
that is, we obtain the matrices $\tau'_i$ by applying the deformation $\Lambda$ to the $\sigma_x$ and $\sigma_y$ coordinates of the matrices $\tau_i$.

The velocity field remains uniform here. Now, in order to simulate an inhomogeneous velocity field, we need to choose a spacetime-dependent $\Lambda(t,x,y)$ transformation. Instead of introducing this distortion on the lattice via the modification of the $u_i$ vectors, the unitary matrices $\tau_i$ can be transformed into matrices $\beta_i(t,x,y)$ to produce the same effect. We indeed seek for a set of matrices $\beta^i(t,x,y)$ that fullfil the following conditions:
\begin{itemize}
    \item (C1) We impose that \begin{equation}
        \Lambda_k^j(t,x,y) R_i^k \tau^i = R_i^j \beta^i(t,x,y), \quad j=x,y
    \end{equation}
    We obtain this condition by equating expressions \ref{eqn:b_expr} and \ref{eqn:tausigma_expr} (with $\tau^i$ replaced by $\beta^i$ in the latter).
    \item (C2) Each of the $\beta^i$ has $\{-1,1\}$ as eigenvalues, i.e. at any time step and any point $(x,y)$ of the lattice, there exist three unitaries $U_i(t,x,y)$ such that
    \begin{equation}
        \beta^i(t,x,y) = U_i^\dagger(t,x,y) \sigma_z U_i(t,x,y).
    \end{equation}
\end{itemize}

Note that condition (C1), that we call the duality condition, implies that the coordinate transformation dictated by $\Lambda_k^j(t,x,y)$ is transferred to the unitary operations $\beta^i(t,x,y)$, instead of the original $\tau^i$. Additionally, condition (C2) will allow us to rewrite the QW evolution in terms of the usual state-dependent translation operators.

To lighten the notations, we will omit the spacetime dependence both in these matrices and in the $U_i(t,x,y)$. The above conditions allow to calculate the $\beta^i$ matrices, which can be written as a combination of Pauli matrices, \emph{i.e.} $\beta^i = \vec n^i \cdot \vec \sigma$, where each $\vec n^i$ must be a real, unit vector $\vec n^i = (\sin \theta_i, 0, \cos \theta_i)$ for some spacetime-dependent angles $\theta_i$. Moreover, we will see that the $\sigma_y$ component is not needed to achieve our purpose.

In this way,
\begin{equation}
    \beta_i = U_i^\dagger \sigma_z U_i =
    \begin{pmatrix}
        \cos \theta_i & \sin \theta_i \\
        \sin \theta_i & -\cos \theta_i
    \end{pmatrix}
\end{equation}
and by diagonalizing each $\beta_i$, we obtain
\begin{equation}
    U_i(\theta_i) = \begin{pmatrix}
        \cos \frac{\theta_i}2 & \sin \frac{\theta_i}2 \\
        -\sin \frac{\theta_i}2 & \cos \frac{\theta_i}2
    \end{pmatrix}.
\end{equation}

The most naive way to implement such a walk is by alternating the unitaries $U_i$ and the shift operators as seen in the first section, where the unitaries were homogeneous due to an appropriate choice of the coin state basis and parameters. However, the unitaries will be three different ones, depending on three independent real parameters $\theta_i$, which is not sufficient to simulate all possible deformations. In fact, each spacetime dependent $\Lambda$ depends on four real free parameters. Thus, in order to recover this local deformation, we need at least one more internal parameter in the QW evolution. Moreover, terms like $\partial_j \lambda_{ij}$ are necessary to conserve the probability distribution of the walker. This can be achieved by iterating twice each unitary operator. All these considerations lead us to define the Geodesic Quantum Walk operator as follows: 
\begin{equation}
    \Psi_{j+2,k} = Z_2 Z_1 \Psi_{j,k}
    \label{eqn:curved_qw}
\end{equation}
where
\begin{equation}
    \begin{split}
        Z_1 &= H \left(\prod_{i=0,1,2} \bar V_{k+i \bmod 3}\right) \left(\prod_{i=0,1,2} V_{k+i \bmod 3}\right) H \\
        Z_2 &= Q^\dagger \left(\prod_{i=0,1,2} \bar K_{k+i \bmod 3}\right) \left(\prod_{i=0,1,2} K_{k+i \bmod 3}\right) Q
    \end{split}
\end{equation}
and
\begin{equation}
    \begin{split}
        V_i &= U_i(\theta_i) S_{u_i} U_i(\theta_i)^\dagger \\
        \bar V_i &= U_i(\theta_i)^\dagger S_{u_i} U_i(\theta_i) \\
        K_i &= U_{i+3}(\theta_{i+3}) S_{u_i} U_{i+3}(\theta_{i+3})^\dagger \\
        \bar K_i &= U_{i+3}(\theta_{i+3})^\dagger S_{u_i} U_{i+3}(\theta_{i+3})
    \end{split}
\end{equation}
where
\begin{equation}
    H = \frac1{\sqrt2} \begin{pmatrix} 1 & 1 \\ 1 & -1 \end{pmatrix} \quad
    Q = \frac1{\sqrt2} \begin{pmatrix} 1 & -i \\ 1 & i \end{pmatrix}
\end{equation}
In practice, for edges labeled $k$, one step of the walk consists in first applying $H$ on each edge, then applying $U_k(\theta_k(t,x,y))^\dagger$ on each edge, then shifting along $u_k$, then applying $U_k(\theta_k(t,x,y))$ on each \emph{$(k+1)$-labeled} edge, then applying $U_{k+1}(\theta_{k+1}(t,x,y))^\dagger$ on each $(k+1)$-labeled-edge, and so on.

Let us discuss this choice. Each $Z_i$ iterates two modified versions of the quantum walk seen in Sec.~\ref{subsect:triangular_qw}, choosing different unitaries for each edge $k$ of the triangle in order to get in the continuous limit spatial derivatives of the unitaries $U_i$. Notice that for the second iteration, we have chosen the transpose conjugate of the unitaries $U_i$. This is justified by the fact that, in the end, we wish to recover spatial derivatives of the form $\partial_j \beta_i = (\partial_j U_i^\dagger) \sigma_z U_i + U_i^\dagger \sigma_z \partial_j U_i$. Iterating twice the same operator $V_i$ or $K_i$ would not be sufficient to recover the total derivative, and we would have twice $U_i^\dagger \sigma_z \partial_j U_i$. Finally, $H$ and $Q$ provide the change of basis in the coin state basis to recover in the continuous limit a true 2D propagation, similarly to the change of basis $U$ in Sec.~\ref{subsect:triangular_qw}. In conclusion, two of the $Z_i$ are necessary to have enough free parameters for the deformation $\Lambda$.

Let us now prove that the GQW correctly converges to the Dirac equation in a $(2+1)$-curved spacetime. To this scope we introduce the following anisotropic scaling following~\cite{manighalam2021continuous}:
\begin{equation}
    \theta_i(t,x,y) = \frac\pi2 + \varepsilon^{1/2} l_i(t,x,y) \quad \Delta_x = \varepsilon^{1/2} \quad \Delta_t = \varepsilon
\end{equation}
for some spacetime-dependent functions $l_i$.
By expanding equation \ref{eqn:curved_qw} up to first order in $\varepsilon$, after a tedious but straightforward computation, which we spare the reader from detailing here, one arrives to the following equation in the continuum limit $\varepsilon \to 0$:
\begin{small}
\begin{multline}
    \partial_t \Psi = [\partial_{u_2} (l_2 \sigma_x + l_5 \sigma_y) + \partial_{u_1} (l_1 \sigma_x + l_4 \sigma_y) + \partial_{u_0} (l_0 \sigma_x + l_3 \sigma_y)] \Psi\\
    + [(l_2 \sigma_x + l_5 \sigma_y) \partial_{u_2} + (l_1 \sigma_x + l_4 \sigma_y) \partial_{u_1} + (l_0 \sigma_x + l_3 \sigma_y) \partial_{u_0}] \Psi
\end{multline}
\end{small}

Now, using equation \ref{eqn:uniform_der}, we can reformulate the above equation in terms of $\partial_x$ and $\partial_y$ and link the coefficients $l_i(r)$ and $\lambda_{ij}(r)$:
\begin{multline}
    \partial_t \Psi = \partial_x(\lambda_{00} \sigma_x + \lambda_{01} \sigma_y) \Psi + (\lambda_{00} \sigma_x + \lambda_{01} \sigma_y) \partial_x \Psi \\
    + \partial_y(\lambda_{10} \sigma_x + \lambda_{11} \sigma_y) \Psi + (\lambda_{10} \sigma_x + \lambda_{11} \sigma_y) \partial_y \Psi
\end{multline}
where
\begin{equation}
    \begin{split}
        \lambda_{00} &= -\frac12(l_2+l_1)+l_0 \\
        \lambda_{01} &= -\frac12(l_5+l_4)+l_3 \\
        \lambda_{10} &= -\frac{\sqrt3}2(l_2-l_1) \\
        \lambda_{11} &= -\frac{\sqrt3}2(l_5-l_4)
    \end{split}
\end{equation}
which coincides with the curved Dirac equation, in its canonical form~\cite{Sinha1994}. Notice that we simply recover Eq.~\ref{eqn:unif_deform_pde} for homogeneous $\Lambda$. Moreover, observe that we have a system of linear equations which is overdetermined, which leaves us enough freedom to recover the deformation matrix we wish. For instance, a good choice to gauge away this ambiguity is $l_5 = -l_4$ and $l_2 = -l_1$, which leads to the unique choice:
\begin{equation}
    l_0 = \lambda_{00} \quad l_1 = \frac{\lambda_{10}}{\sqrt3} \quad l_3 = \lambda_{01} \quad l_4 = \frac{\lambda_{11}}{\sqrt3}
\end{equation}
We have thus proven that a non-homogeneous arbitrary deformation of the triangulation can be simulated by local non-homogeneous unitaries applied on the edges of a homogeneous regular triangulation.\\

\section{Conclusion}\label{sec:conc}
A new and more general family of QWs on triangulations has been introduced. We have shown that it is possible to relate quantum transport on curved surfaces to a quantum simulator, on regular triangular grids, based on a non-uniform spatial distribution of local unitaries. GQWs are a non-trivial generalization of the well-known geodesic random walkers, where the metric is embedded in the displacement operators. We believe that this result has relevance for both modeling and simulating quantum transport on discrete structures, such as fullerene molecules. Moreover we aim, in the next future, to address fast and efficient optimization problems by using quantum curved space optimization methods, inspired by the aforementioned results. 

\bibliographystyle{ieeetr} 
\bibliography{bibliography}
\end{document}